\theoremstyle{definition}
\newtheorem{definition}{Definition}[section]
\newtheorem{theorem}{Theorem}[section]
\begin{document}

\def\crta{\vrule height1.41ex depth-1.27ex width0.34em}
\def\dj{d\kern-0.36em\crta}
\def\Crta{\vrule height1ex depth-0.86ex width0.4em}
\def\Dj{D\kern-0.73em\Crta\kern0.33em}
\dimen0=\hsize \dimen1=\hsize \advance\dimen1 by 40pt

\title{Hypergraph Contextuality}

\author{Mladen Pavi\v ci\'c}
\email{mpavicic@irb.hr}
\homepage{http://www.irb.hr/users/mpavicic}
\affiliation{Center of Excellence for Advanced Materials and Sensors
  CEMS, Photonics and Quantum Optics
  Unit, Ru\dj er Bo\v skovi\'c Institute, Zagreb, Croatia.}

\date{November 10, 2019}

\keywords{quantum contextuality; hypergraph contextuality;
  MMP hypergraphs; operator conteextuality; qutrits;
  Yu-Oh contextuality; Bengtsson-Blanchfield-Cabello contextuality;
  Xu-Chen-Su contextuality; entropic contextuality}

\begin{abstract}Quantum contextuality is a source of quantum computational
  power and a theoretical delimiter between classical and quantum
  structures. It has been substantiated by numerous experiments 
  and prompted generation of state independent contextual sets, i.e.,
  sets of quantum observables capable to reveal quantum contextuality
  for any quantum state of a given dimension. There are two major
  classes of state-independent contextual sets: the Kochen-Specker
  ones and the operator-based ones. In this paper, we present a third,
  hypergraph-based class of contextual sets. Hypergraph inequalities
  serve as a measure of contextuality. We limit ourselves to qutrits
  and obtain thousands of 3-dim contextual sets. The simplest of
  them involves only 5 quantum observables, thus enabling a
  straightforward implementation. They also enable establishing new
  entropic contextualities.
\end{abstract}

\maketitle

\section{\label{sec:intro}Introduction}

Recently, quantum contextuality found applications in quantum
communication \cite{cabello-dambrosio-11,nagata-05}, quantum
computation \cite{magic-14,bartlett-nature-14}, quantum
nonlocality \cite{kurz-cabello-14}, and lattice theory
\cite{bdm-ndm-mp-fresl-jmp-10,mp-7oa}. This has prompted 
experimental implementation with photons
\cite{simon-zeil00,michler-zeil-00,amselem-cabello-09,liu-09,d-ambrosio-cabello-13,ks-exp,ks-exp-03,lapkiewicz-11,zu-wang-duan-12,canas-cabello-8d-14,canas-cabello-14,zhan-sanders-16},
classical light
\cite{li-zeng-17,li-zeng-19,frustaglia-cabello-16,zhang-zhang-19},
neutrons \cite{h-rauch06,cabello-fillip-rauch-08,b-rauch-09},
trapped ions \cite{k-cabello-blatt-09},
solid state molecular nuclear spins  \cite{moussa-09},
and superconducting quantum systems \cite{jerger-16}.

Quantum contextuality, the aforementioned citations refer to,
precludes assignments of predetermined values to dense sets of
projection operators and in our approach we shall keep to this
feature of the considered contextual sets. Contextual theoretical
models and experimental tests involve additional subtle
issues, such as the possibility of classical noncontextual hidden
variable models that can reproduce quantum mechanical predictions
up to arbitrary precision \cite{barrett-kent-04} or a generalization
and redefinition of noncontextuality
\cite{kunjwal-spekkens-15,kunjwal-18-arxiv}.
These elaborations are outside of the scope of the present paper,
though, since it is primarily focused on contextuality which finds
applications within quantum computation vs.~noncontextuality which
is inherent in the current classical binary computation. That means
that we consider classical models with predetermined binary values,
that can be assigned to measurement outcomes of classical observables,
which underlie the latter computation, vs.~quantum models that do
not allow for such values and underlie quantum computation.
As for a direct relevance of our results for quantum computation
we point out that the hypergraph presented in Fig.~2 of
\cite{magic-14}, from which the contextual ``magic'' of quantum
computation has been derived, is a kind of hypergraph contextual
sets we present in this paper. However, the hypergraph is from a
4-dim Hilbert space, so, we will not elaborate on it in the paper. 

We give a pedestrian overview of our approach, methods, and results
as well as their background in the last few paragraphs of this
Introduction, describing the organization of the paper. 

A class of state-independent contextual (SIC)
\cite{beng-blan-cab-pla12} sets that have been elaborated on the
most in the literature are the Kochen-Specker (KS) sets
\cite{cabell-est-96a,pmmm05a,aravind10,waeg-aravind-jpa-11,mfwap-11,mp-nm-pka-mw-11,waeg-aravind-megill-pavicic-11,waegell-aravind-12,waeg-aravind-pra-13,waeg-aravind-fp-14,waeg-aravind-jpa-15,waeg-aravind-pla-17,pavicic-pra-17,pm-entropy18,pwma-19}. They boil down to a list of
$n$-dim vectors and their $n$-tuples of orthogonalities, such
that one cannot assign definite binary values to them.

Recently, different SIC sets has been designed and/or considered by
Yu and Oh \cite{yu-oh-12}, Bengtsson, Blanchfield, and Cabello
\cite{beng-blan-cab-pla12}, Xu, Chen, and Su
\cite{xu-chen-su-pla15}, Ramanathan and Horodecki \cite{ram-hor-14},
and Cabello, Kleinmann, and Budroni \cite{cabell-klein-budr-prl-14}.
They all make use of operators defined by vectors that define their
sets. You and Oh construct rather involved expression of state/vector
defined $3\times 3$ operators that eventually reduces to a multiple
of a unit operator while the other authors make use of projectors
whose expressions also reduce to a multiple of a unit operator.
Therefore we call their sets the {\em operator-based contextuality
sets} and assume that they form an {\em operator contextuality
class}. All the sets make use of a particular list of 3-dim vectors
and their orthogonal doublets and triplets such that a given
expression of definite binary variables has an upper bound which is
lower than the one of a corresponding quantum expression. The last
two Refs.~\cite{ram-hor-14,cabell-klein-budr-prl-14} also provide
us with the necessary and sufficient condition for being a SIC set
in any dimension.

The difference between the KS contextuality and the operator
contextuality is that KS statistics includes measured values of all
vectors from each $n$-tuple, while the statistics of measurements is
built on values obtained via operators defined by possibly less than
$n$ vectors from each $n$-tuple.

In this paper, we blend the two aforementioned contextualities
so as to arrive at a hypergraph one. We consider hypergraphs
with 3-dim vectors in which some of those vectors that belong to
only one triplet are dropped, as in the observable approach, and
generate smaller hypergraphs from them, such that one cannot assign
definite binary values to them, as in the KS approach. We call our
present approach the McKay-Megill-Pavi{\v c}i{\'c} hypergraph (MMPH)
approach. MMPH non-binary sets directly provide us with noncontextual
inequalities. On the other hand, via our algorithms and programs we
obtain thousands of smaller MMPH sets which can serve for various
applications as, e.g., to generate new entropic tests of
contextuality or new operator-based contextual sets. 

The smallest MMPH non-binary set we obtain is a pentagon with five
vectors (vertices) cyclically connected with 5 pairs of orthogonality
(edges). It corresponds to the pentagram from Ref.~\cite{klyachko-08},
implemented in
\cite{lapkiewicz-11,li-zeng-17,zhang-zhang-19}. The
difference is that the pentagram inequality is state dependent,
while the MMPH pentagon inequality is state independent.
More specifically, in Ref.~\cite{klyachko-08}, one obtains a
nonclassical inequality by means of projections of five pentagram
vectors at a chosen sixth vector directed along fivefold symmetry
axis of the pentagram. By our method, one gets a nonclassical
inequality between the maximum sum of possible assignments of 1,
representing classical measurement clicks, and the sum of
probabilities of obtaining quantum measurement clicks.

Entropic test of contextuality for pentagram/pentagon has been
formulated in Ref.~\cite{kurz-raman-12} following
Ref.~\cite{braun-cav-99}. It can be straightforwardly
reformulated for other MMPH non-binary sets we obtained.

The paper is organized as follows.

In Sec.~\ref{subsec:form}
we present the hypergraph formalism and define $n$-dim MMPH set
and $n$-dim MMPH binary and non-binary sets as well as {\em filled}
MMPH set. We explain how vertices and edges in an $n$-MMPH set
correspond to vectors and their orthogonalities, i.e., $m$-tuples
($2\le m\le n$) of mutually orthogonal vectors, respectively.

In Sec.~\ref{subsec:yuohks} we give the KS theorem and
a definition of a KS set and prove that a KS set is a special
non-binary set. In Def.~\ref{def:crit} we define a {\em critical}
KS set, i.e., the one which would stop being a KS set if we removed
any of its edges. Then we introduce known KS sets to compare
them with operator defined sets. In particular, we start with 
Conway-Cohen, Bub, Peres, and original Kochen-Specker's sets. We
show that the number of vectors they are characterised with in the
original papers and most of the subsequent ones as well as in books,
i.e., 31, 33, 33, and 117, respectively, are not critical.
That, actually, enables the whole approach presented in this
paper. We show that the aforementioned authors dropped the vectors 
that are contained in only one triplet. If we took all the stripped
vectors into account, i.e., if we formed filled sets, we would get
51, 49, 57, and 192 vectors, respectively. These sets are critical
and the majority of researchers assumed that their stripped versions
are critical too and so they did not try to use them as a source of
smaller non-classical 3-dim sets.

Next, we connect and compare KS sets with operator-based sets,
in particular YU-Oh's 13 vector set whose filled version has 25
vectors and 16 triplets---we denote it as 25-16. In
Fig.~\ref{fig:yu-oh-peres} we show Yu-Oh's 25-16 as a subgraph of
Peres' 57-40. In Fig.~\ref{fig:yu-oh} we show how 25-16 can be
stripped of vectors contained in only one triplet, so as to a
arrive at the original Yu-Oh's 13-16 set.
Eqs.~(\ref{eq:vec})-(\ref{eq:ine}) and their comments explain how
Yu and Oh defined their operators with the help of the 13 vectors
and how they used them to arrive, via Eq.~(\ref{eq:L}), at the
inequality defined by Eq.~(\ref{eq:ine}). We then used the operator
expression given by Eq.~(\ref{eq:L}) to test 50 sets smaller and
bigger than the 13-16 but did not obtain an analogous result.
Some of the sets are shown in Fig.~\ref{fig:d}.

In Sec.~\ref{subsec:masters} we give a historical background of
stripping the aforementioned vectors that are contained in only one
triplet and explain what was behind that ``incomplete triplets''
issue. Then we give MMPH strings of Conway-Kochen's 31-37, Bub's
33-36, Peres' 33-40, and Kochen-Specker's 117-118 non-critical
but still non-binary non-classical MMPH sets and take them as our
master sets from which we generate smaller non-binary critical
MMPH sets in the next section. However, we stress that any set we
obtain by stripping some other number of vertices contained in only
one edge from any one of the original four KS sets can serve
us as a master set. We give a Peres' 40-40 set as an example.

In Sec.~\ref{subsec:crit} we start with Def.~\ref{def:crit-nb}
of a critical MMPH non-binary set which differs from the one of
a critical KS set. If we strip more and more edges from a critical
KS set  we shall never come to a KS set again. This is not so 
with MMPH non-binary sets. MMPH non-binary critical sets might
properly contain smaller MMPH non-binary critical sets whose
number of edges is smaller than the original critical set for
at least 2 edges.

Via our algorithms and programs, we obtain thousands of critical
sets from our master sets, whose distributions are shown in
Fig.~\ref{fig:dis-strip}. We say that a collection of MMPH
non-binary subgraphs of an MMPH master form its class. 

Next we define measurements which can distinguish contextual from
non-contextual MMPH sets, i.e., non-binary from binary ones. 
Similarly as with operator-based contextual measurements,
dropped vertices are not considered, i.e., clicks obtained at their
corresponding out-ports are not taken into account when obtaining the
statistics of collected data. So, measurements of MMPH non-binary sets
are carried out as for KS sets with triplets, i.e., with the 1/3
probability of detection at each out-port, and via {\em calibrated}
detections of a particle or a photon at out-ports of a gate
representing a doublet with the 1/2 probability of getting a click
at each of the two considered ports, while ignoring the third one.
When a vertex shares a mixture of triplet and doublet edges the
probability of detection is $1/p$, where $1/3\le p\le 1/2$. We call
detections at all ports notwithstanding whether we include them in
our final statistics or not, {\em uncalibrated} detections---they
simply have 1/3 probability of detection at every port. 

To obtain contextual distinguishers of an MMPH set we consider the
sum of probabilities of getting clicks for all considered vertices
and call it a {\em quantum hypergraph index}. We distinguish a
calibrated quantum hypergraph index, which we denote as $HI_q$ and
an uncalibrated one, which we denote as $HI_{q-unc}$. On the other
hand, each MMPH set allows a maximal number of 1s assigned to
vertices so as to satisfy the two conditions from
Def.~\ref{df:mmphs}. We call the number {\em classical hypergraph
index} and denote it as $HI_c$. Our {\em weak} contextual
distinguisher is the inequality: $HI_q>HI_c$ and the {\em strong}
one is the inequality $HI_{q-unc}>HI_c$. Yu-Oh, Bub, Conway-Kochen,
and Peres' MMPH non-binary sets as well as others given in the
section, like, e.g., 13-10, satisfy both inequalities. 

We present several small critical MMPH sets in Figs.~\ref{fig:bub-c} 
and \ref{fig:c-k-c} and discuss their features. We also calculate
Yu-Oh's inequalities for several sets different from Yu-Oh's
13-16 set. None of the 50 tested sets satisfy the inequality.  

In Sec.~\ref{sec:disc} we discuss and reexamine the steps and
details of our approach. 


\section{Results}

We consider a set of quantum states represented by vectors in a 3-dim
Hilbert space ${\mathscr{H}}^3$ grouped in triplets of mutually
orthogonal vectors. We describe such a set by means of a hypergraph
which we call a {\em McKay-Megill-Pavi{\v c}i{\'c} hypergraph} (MMPH).
In it, vectors themselves are represented by vertices and mutually
orthogonal triplets of them by edges. However, an MMPH itself has
a definition which is independent of a possible representation of 
vertices by means of vectors. For instance, there are MMPHs without
a coordinatization, i.e., MMPHs for whose vertices vectors, one
could assign to, do not exist. Also, edges can contain less than 3
vertices, i.e., 2, and form doublets. When a coordinatization exist,
that does not mean that a doublet belongs to a 2-dim edge, but only
that we do not take an existing third vertex/vector into account.

\subsection{\label{subsec:form}Formalism}

Let us define the hypergraph formalism.

A hypergraph is a pair $v\text{-}e$ where $v$ is a set of elements
called vertices and $e$ is a set of non-empty subsets of $e$ called
edges. Edge is a set of vertices that are in some sense {\em related}
to each other, in our case {\em orthogonal} to each other.

The first definition of MMPH was given in \cite{pmmm05a-corr}
where we called them, not hypergraphs, but diagrams.
In \cite{pavicic-pra-17} we gave a definition of an $n$-dim
{\em MMP hypergraph} which required that each edge has at least
3 vertices and that edges that intersect each other in $n\text{-}2$
vertices contain at least $n$ vertices. The definition of
{\em MMPH} is slightly different.

\begin{definition}\label{def:mmp}An {\rm MMPH} is an $n$-dim
hypergraph in which
\begin{enumerate}
\item Every vertex belongs to at least one edge;
\item Every edge contains at least 2 vertices;
\item Edges that intersect each other in $m-2$ vertices contain at
  least $m$ vertices, where $2\le m\le n$.
\end{enumerate}
\end{definition}

Then, in \cite{pm-entropy18} we presented a hypergraph
reformulation of the Kochen-Specker theorem \cite{koch-speck}
from which we derive the following definition of an MMPH
non-binary set.

\begin{definition}\label{df:mmphs}
  $n$-dim {\rm MMPH non-binary set}, $n\ge 3$,
  is a hypergraph whose each edge contains at least two and at
  most $n$ vertices to which it is impossible to assign 1s and
  0s in such a way that
  \begin{enumerate}
\item No two vertices within any of its edges are both assigned
  the value 1;
\item In any of its edges, not all of the vertices are
  assigned the value 0.
\end{enumerate}
An {\rm MMPH} set to which it is possible to assign 1s and 0s so as to
  satisfy the above two conditions we call an {\rm MMPH binary set}.\\
  An {\rm MMPH} non-binary set with edges of mixed sizes to which
  vertices are added so as to make all edges of equal size each
  containing $n$ vertices is called {\rm filled MMPH set}.
\end{definition}

A coordinatization of an MMPH non-binary set means that the vertices
of its {\em filled} MMPH denote $n$-dim vectors in
${\mathscr{H}}^n$, $n\ge 3$ and that its edges represent orthogonal
$n$-tuples, containing vertices corresponding to those mutually
orthogonal vectors. Then the vertices of an MMPH set with edges
of mixed sizes inherit its coordinatization from the 
coordinatization of its filled set. In our present approach a
coordinatization is automatically assigned to each hypergraph by
the very procedure of its generation from master MMPHs as we shall
see below. 

In the real 3-dim Hilbert space edges form loops of order five
(pentagon) or higher as we proved in \cite{pmmm05a}. For complex
vectors our calculations always confirmed this result but we were
unable to find an exact proof. Loops of order two are
precluded by Def.~\ref{def:mmp}(3).

MMPH are encoded by means of printable ASCII characters organized
in a single string, and within it in edges, which are separated by
commas; each string ends with a period. Vertices are
denoted by one of the following characters: {{\tt 1 2 \dots\ 9 A B
\dots\ Z a b \dots\ z ! " \#} {\$} \% \& ' ( ) * - / : ; \textless\ =
\textgreater\ ? @ [ {$\backslash$} ] \^{} \_ {`} {\{}
{\textbar} \} \textasciitilde} \cite{pmmm05a}. When all of
them are exhausted one reuses them prefixed by `+',
then again by `++', and so forth. An MMPH with $k$
vertices and $l$ edges we denote as a $k$-$l$ set.
In its graphical representation, vertices are depicted as dots and
edges as straight or curved lines connecting orthogonal vertices. 
In its ASCII string representation (used for computer processing)
each MMPH is encoded in a single line followed by assignments of
coordinatization to $k$ vertices. We handle MMP hypergraphs by
means of algorithms in the programs
SHORTD, MMPSTRIP, MMPSUBGRAPH, VECFIND, STATES01, and others
\cite{bdm-ndm-mp-1,pmmm05a-corr,pmm-2-10,bdm-ndm-mp-fresl-jmp-10,mfwap-s-11,mp-nm-pka-mw-11}.  

\subsection{\label{subsec:yuohks}KS vs.~operator
  contextuality}

Let us start with the {\em Kochen-Specker} theorem and KS sets.
Then we shall connect them with the vectors and operators of one
type of operator-based contextuality introduced by Yu and Oh. 

\begin{theorem}\label{th:ks} ({\em Kochen-Specker 
{\rm \cite{gleason,koch-speck,zimba-penrose}}})
In ${\mathscr{H}}^n$, $n\ge 3$, there are sets of $n$-tuples of mutually
orthogonal vectors to which it is impossible to assign 1s and 0s
in such a way that
\begin{enumerate}
\item No two orthogonal vectors are both
assigned the value 1;
\item In any group of $n$ mutually orthogonal vectors, not all of
the vectors are assigned the value 0.
\end{enumerate}
The sets of such vectors are called {\em KS sets\/} and the vectors
themselves are called {\em KS vectors\/}.
\end{theorem}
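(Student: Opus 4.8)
The plan is to establish the statement constructively in the base case $n=3$ and then lift it to all $n>3$. For $n=3$ I would exhibit a concrete finite family of unit vectors in ${\mathscr{H}}^3$ together with the list of its mutually orthogonal triples --- for instance the original $117$-vector configuration of Kochen and Specker \cite{koch-speck}, or one of the smaller real-vector sets (Conway--Kochen, Bub, Peres) discussed later in this paper --- and argue that it admits no $\{0,1\}$-labeling obeying (1) and (2). The useful reformulation is that conditions (1) and (2) together force exactly one vector per triple to carry the label $1$, so a putative labeling is nothing but a globally consistent choice of one ``true'' vector in each triple, consistency being required across triples that share a vector.

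The core of the argument is then a finite combinatorial search. Assuming such a labeling exists, fix the ``true'' vector in one triple; because overlapping triples share vectors this forces the labels of neighbouring vectors (a vector orthogonal to a ``true'' one must be $0$, and a triple two of whose vectors are $0$ has its third vector $1$), and propagating these forced assignments through the orthogonality graph one eventually reaches a triple compelled either to contain two ``true'' vectors or to be all-$0$, i.e.\ a violation of (1) or of (2). Carrying this out for each of the finitely many initial choices exhausts all cases and produces the contradiction. This case analysis is the step I expect to be the bulk of the work, and it is the one best delegated to an exhaustive computer check of the kind performed by the programs mentioned above (e.g.\ STATES01), although historically it was done by hand with careful bookkeeping.

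To pass from $n=3$ to $n>3$ I would embed ${\mathscr{H}}^3$ isometrically as a subspace of ${\mathscr{H}}^n$, fix an orthonormal basis $e_4,\dots,e_n$ of the orthogonal complement, and replace each orthogonal triple $\{a,b,c\}$ of the base configuration by the orthogonal $n$-tuple $\{a,b,c,e_4,\dots,e_n\}$. A labeling of the enlarged set restricted to the original vectors still satisfies (1); the thing to guard against is the cheap way of satisfying (2), namely labeling some $e_i$ with $1$. This is removed by adjoining a few extra $n$-tuples built from $e_4,\dots,e_n$ together with auxiliary directions inside the complement, chosen so that every $e_i$ is forced to be labeled $0$, after which (2) for the original triples is recovered and the $n=3$ contradiction reappears. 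Equivalently one may invoke the standard inductive step of Kochen and Specker promoting a KS configuration in dimension $d$ to one in dimension $d+1$. The main technical nuisance of this half of the proof is precisely that naive padding makes condition (2) trivially satisfiable, so the padding directions must themselves be constrained. I would also note the non-constructive route via Gleason's theorem \cite{gleason,zimba-penrose}: a $\{0,1\}$-valued frame function on ${\mathscr{H}}^n$, $n\ge 3$, is impossible because it would be a continuous $\{0,1\}$-valued function on the connected unit sphere, hence constant; a compactness argument then extracts a finite offending set. I prefer the explicit construction, however, since the remainder of the paper rests on concrete finite sets.
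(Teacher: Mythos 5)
The paper does not actually prove this theorem: it is quoted as a known result with citations to Gleason, Kochen--Specker and Zimba--Penrose, and the paper's substantive content for $n=3$ is the exhibition of concrete filled KS sets (Peres' 57-40, Bub's 49-36, Conway--Kochen's 51-37, Kochen--Specker's 192-118) whose non-colorability is checked by the program STATES01. Your $n=3$ plan --- exhibit a concrete configuration and rule out all $\{0,1\}$-labelings by propagation/exhaustive search --- is exactly this standard route and is sound. One caveat that matters for this particular paper: you must use the \emph{filled} sets. The ``117-vector'' and ``33-vector'' configurations you name contain edges listed with only two of their three mutually orthogonal vectors, so they are not sets of complete $n$-tuples and do not instantiate the theorem as stated (they are MMPH non-binary sets, not KS sets); the correct witnesses are 192-118, 57-40, etc. This distinction is the central point of Secs.~\ref{subsec:yuohks}--\ref{subsec:masters}.

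The lifting to $n>3$ is where your sketch has a real soft spot. Padding each triple $\{a,b,c\}$ with $e_4,\dots,e_n$ and then ``forcing every $e_i$ to be labeled $0$ by extra tuples built inside the complement'' cannot work as described when $n=4$ or $n=5$: the orthogonal complement of the 3-dimensional host subspace is then 1- or 2-dimensional, no KS-type obstruction exists there, and orthogonality constraints confined to the complement can never force a vector to take the value $0$. The repair is to use auxiliary directions \emph{outside} the complement --- e.g.\ place rotated copies of the 3-dim KS set in several 3-dim subspaces so that whichever padding vector receives the label $1$, some copy is left with all its padding vectors at $0$ and must be consistently colored, which is impossible --- or simply to invoke the Kochen--Specker $d\to d+1$ induction, which you mention as an alternative. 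Your Gleason/continuity argument is a valid non-constructive fallback. Since the paper itself never needs $n>3$ (it works exclusively with qutrits), none of this affects the paper, but as a freestanding proof of the theorem the lifting step needs the fix above.
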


There is a one-to-one correspondence between KS $n$-tuples of vectors
and MMPH edges when they are all of their maximal size, as established
in \cite{pmmm05a,pavicic-pra-17,pm-entropy18,pwma-19}, and between KS
vectors and MMPH vertices with coordinatization within an MMPH with
maximal edges.

\begin{theorem}\label{th:mmph-nb}
  An $n$-dim {\rm MMPH} non-binary set with a coordinatization whose
  each edge contains $n$ vertices, is a {\rm KS} set.
\end{theorem}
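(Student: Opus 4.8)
The plan is to show that the two defining conditions of an $n$-dim MMPH non-binary set (Def.~\ref{df:mmphs}) coincide, under the stated hypotheses, with the two conditions of a KS set (Theorem~\ref{th:ks}), so that the impossibility of a $\{0,1\}$-assignment transfers verbatim between the two settings. The key observation is that the hypotheses do all the work: we are told the set has a coordinatization, so every vertex $v$ is labelled by a vector $\psi_v\in\mathscr{H}^n$, and we are told every edge contains exactly $n$ vertices, which (since a coordinatization makes the vertices of an edge correspond to mutually orthogonal vectors) means every edge is a full orthogonal $n$-tuple. First I would record these two facts explicitly, so that the vertex set of the MMPH is in bijection with a set of vectors and the edge set is in bijection with a family of orthogonal $n$-tuples drawn from those vectors.

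Next I would translate the two clauses. For clause~(1): in an MMPH non-binary set no two vertices \emph{within a common edge} may both carry $1$; I must upgrade this to ``no two \emph{orthogonal} vectors carry $1$,'' which is the KS clause~(1). This is the only place where a small argument is needed, and I would handle it by invoking the one-to-one correspondence between KS $n$-tuples and maximal MMPH edges, and between KS vectors and coordinatized vertices, asserted in the sentence preceding the theorem (citing \cite{pmmm05a,pavicic-pra-17,pm-entropy18,pwma-19}): any two orthogonal vectors in the coordinatization sit together in some orthogonal $n$-tuple, hence their vertices share an edge, so the MMPH constraint already forbids assigning $1$ to both. Conversely two vertices in a common edge are orthogonal vectors, so the KS constraint covers the MMPH one. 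For clause~(2) the translation is immediate: ``not all vertices of an edge are $0$'' is literally ``in any orthogonal $n$-tuple not all vectors are $0$,'' because edges are exactly the $n$-tuples. Hence a $\{0,1\}$-assignment satisfying the KS conditions on the vectors would restrict to one satisfying the MMPH conditions on the vertices, and vice versa.

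Finally I would close the argument by contraposition: the set is an MMPH non-binary set, so by Def.~\ref{df:mmphs} no $\{0,1\}$-labelling of its vertices meets conditions (1)--(2); by the equivalence just established, no $\{0,1\}$-labelling of the corresponding vectors meets the KS conditions (1)--(2) of Theorem~\ref{th:ks}; therefore the coordinatizing vectors form a KS set.

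The main obstacle, and really the only nontrivial point, is clause~(1): the MMPH condition is phrased \emph{per edge} while the KS condition is phrased \emph{per orthogonal pair}, and these agree only because in the coordinatized, maximal-edge situation every orthogonal pair is completable to an edge. I would want to make sure the cited correspondence genuinely delivers ``every pair of orthogonal coordinatizing vectors lies in a common $n$-tuple of the set''; if one wanted a self-contained treatment one could note that an $n$-tuple containing two given orthogonal vectors always exists in $\mathscr{H}^n$, but for it to be an \emph{edge of this particular MMPH} is exactly what the established bijection with KS $n$-tuples guarantees. Everything else is a direct unwinding of the two definitions.
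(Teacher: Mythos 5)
Your proof is correct and is essentially the argument the paper intends: the paper's own proof is a one-sentence appeal to the vertex--vector and edge--$n$-tuple correspondences, which is exactly what you unwind. One substantive remark, though: the point you single out as ``the main obstacle'' --- upgrading the per-edge MMPH condition (1) to the per-orthogonal-pair KS condition (1), which would require every orthogonal pair of coordinatizing vectors to lie in a common edge --- is not needed for the theorem, and is in general not guaranteed by Def.~\ref{def:mmp} (a coordinatization may contain orthogonal pairs that never co-occur in an edge of the hypergraph). The implication the theorem requires runs only in the other direction: to show no assignment satisfies the KS conditions, it suffices that every assignment satisfying the KS conditions of Theorem~\ref{th:ks} also satisfies the conditions of Def.~\ref{df:mmphs}, and both of the needed implications are the ones you correctly call trivial (vertices sharing an edge are orthogonal vectors, and each edge is a full orthogonal $n$-tuple). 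Your closing contraposition in fact invokes only these trivial directions, so the argument goes through cleanly and the reservation you express in the last paragraph can simply be dropped rather than resolved.
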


\begin{proof} It follows straightforwardly from the KS theorem, its
  definition of a KS set and the aforementioned correspondences
  between its vectors and MMPH vertices.
\end{proof}

In 1988 Asher Peres presented a simple proof of the KS theorem
in a 3-dim Hilbert space using real vectors \cite{peres}.
He implicitly made use of 57 vectors/rays and 40 triplets of mutually
orthogonal vectors but seemed to have dropped 24 vectors that appear
in only one triplet and called his proof a ``33 vector [ray] proof.''
However, he admitted the role of the remaining vectors:
``It can be shown that if a single ray is deleted from the set of 33,
the contradiction disappears. It is so even if the deleted ray is not
explicitly listed in table 1.'' \cite[L176, bottom paragraph]{peres}.
From \cite[Table 1]{peres} we can reconstruct the 33 vectors
within their 40 triplets together with the ``non-explicit'' 24 vectors
and represent them in our MMPH notation, obtaining an MMPH non-binary
set with 57 vertices (vectors) and 40 edges (triplets), i.e., a 57-40 KS
set. We did so in two different ways with two resulting (but isomorphic)
hypergraphs in \cite[Fig.~4]{bdm-ndm-mp-fresl-jmp-10} and 
\cite[Fig.~19]{pavicic-pra-17}. Here we give a third MMPH
representation (isomorphic to the previous two) which contains the
so-called full scale Yu-Oh set
{\tt 123,345,567,789,9AB,BCD,DEF,FGH,HI1,1JK,KLA, 5LF,JPD,JM7,3OB,HN9.}
we elaborate on below. The representation is carried out via 
our programs SUBGRAPH and LOOP \cite{pm-entropy18}. 

Peres' 57-40 MMPH KS set reads:

\parindent=0pt

{\tt 123,345,567,789,9AB,BCD,DEF,FGH,HI1,1JK,KLA,
  JM7,3BO,H9N,JPD,FL5,QRS,STA,AUV,VWX,XYO,OZa,
  abc,cdC,CeQ,Sha,QgX,Vfc,bg9,qmU,Nnq,Bij,jku,
  klN,ur8,8st,iqt,Tpk,Tot,uvU.}

\parindent=10pt

Its graphical representation is given in Fig.~\ref{fig:yu-oh-peres}(a).

\begin{figure}[hbt]
\begin{center}
  \includegraphics[width=0.49\textwidth]{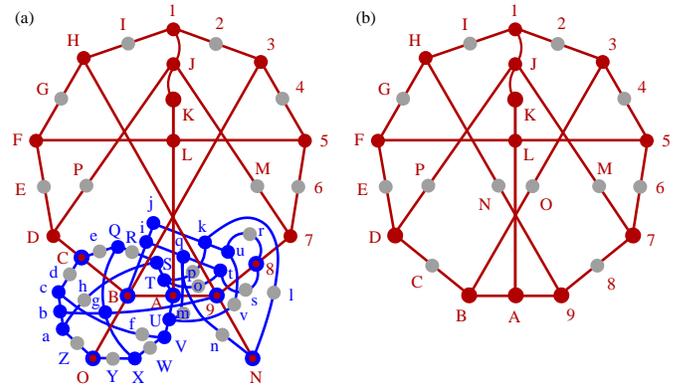}
\end{center}
\caption{(a) Peres' KS set 57-40 in the MMPH representation and
  containing the full scale Yu-Oh set (drawn in red); (b) The full
  scale Yu-Oh non-KS set 25-16;  Vertices (vectors) that share only
  one edge (triplet) are given as gray dots. See text.}
\label{fig:yu-oh-peres}
\end{figure}

Notice that gray dots 8,D,N,O in
Fig.\ref{fig:yu-oh-peres}(b) are not gray in
Fig.\ref{fig:yu-oh-peres}(a) and therefore the representation
of the original full scale 57-40  Peres KS set (with all gray dots
included) by means of the three original Yu-Oh non-KS sets (with gray
vertices dropped), as depicted in Fig.~1 of \cite{bengtsson-12},
apparently does not work. Also, as verified with our program SUBGRAPH,
Yu-Oh's set is not a subgraph  of Peres' 33-40 set (with all gray
dots dropped). On the other hand, Yu-Oh's set cannot be a subgraph of
Peres' 57-40 because it lacks gray dots. The full scale Yu-Oh's set
25-16 shown Fig.\ref{fig:yu-oh-peres}(b) is, of course, a subgraph
of the full-scale Peres' 57-40 set as shown in
Fig.\ref{fig:yu-oh-peres}(a) an confirmed by SUBGRAPH.

The arguments that all vertices are indispensable for an experimental
implementation of a KS set can be found in
\cite[In particular Table on p.~804]{larsson},
\cite[pp.~1583 top, 1588 bottom, and top 1589]{pmmm04c},
and \cite[p.~332, end of the 1st par.]{held-09}. In essence,
every $n$-tuple from the KS Theorem \ref{th:ks} should contain no
less than $n$ vectors. 

Below, the coordinatization of Peres' 57-40 set is obtained via
VECFIND \cite{pm-entropy18} from the vector components
$0,\pm 1,\sqrt{2}$ (the component $-\sqrt{2}$, used by Peres in
\cite{peres} is not needed):

\parindent=0pt
1=\{1,$\sqrt 2$,-1\},3=\{0,1,$\sqrt 2$\},5=\{-1,$\sqrt 2$,-1\},7=\{$\sqrt 2$,1,0\},

8=\{-1,$\sqrt 2$,0\},9=\{0,0,1\},A=\{0,1,0\},B=\{1,0,0\},
C=\{0,$\sqrt 2$,1\},D=\{0,-1,$\sqrt 2$\},F=\{1,$\sqrt 2$,1\},H=\{$\sqrt 2$,-1,0\},

J=\{-1,$\sqrt 2$,1\},K=\{1,0,1\},L=\{1,0,-1\},N=\{1,$\sqrt 2$,0\},

O=\{0,$\sqrt 2$,-1\},Q=\{-1,-1,$\sqrt 2$\},S=\{$\sqrt 2$,0,1\},T=\{-1,0,$\sqrt 2$\},

U=\{1,0,$\sqrt 2$\},V=\{$\sqrt 2$,0,-1\},X=\{1,1,$\sqrt 2$\},a=\{-1,1,$\sqrt 2$\},

b=\{1,1,0\},c=\{1,-1,$\sqrt 2$\},g=\{1,-1,0\},i=\{0,1,-1\},j=\{0,1,1\},

k=\{$\sqrt 2$,-1,1\},q=\{$\sqrt 2$,-1,-1\},t=\{$\sqrt 2$,1,1\},u=\{$\sqrt 2$,1,-1\}

\parindent=10pt

The aforementioned Peres' statement ``if a single ray is deleted
from the set of 33, the contradiction disappears'' amounts to a
coarse definition of a {\em vertex-critical} KS set:
``A KS [set] is termed critical iff it cannot be made smaller
by deleting the [vertices]'' \cite{ruuge12}. However, in KS
sets, there are edges whose removal does not
remove any vertex (but nevertheless cause a disappearance of
the KS property) and, on the other hand, no vertex can be
removed from a KS set without removing at least one edge as well,
in the sense that all edges/$n$-tuples should contain $n$ mutually
orthogonal vertices/vectors.

Therefore, we adopt a definition of an {\em edge-critical}
KS set \cite{pmm-2-10,bdm-ndm-mp-fresl-jmp-10,pavicic-pra-17}
(MMPH sets will require a redefinition of critical sets, as we
shall see later on):

\begin{definition}\label{def:crit} 
KS sets that do not properly contain any KS subset, meaning
that if any of its edges were removed, they would stop being KS 
sets, are called {\em critical\/} KS sets. 
\end{definition}

Hence, the set 
{\tt 13,35,57,79,9AB,BD,DF,FH,H1,1JK, KLA,5LF,JD,J7,3B,H9.}
Yu and Oh obtained in \cite{yu-oh-12} cannot be a KS set since
it is a subgraph of a critical KS set (Peres' set) and therefore
cannot provide a proof of the KS theorem contrary to the claim
in the title of \cite{yu-oh-12}, as we also
show in some detail in \cite[Sec.~XII]{pavicic-pra-17}. But,
in \cite{yu-oh-12}, Yu and Oh do define a new kind of contextuality
which we shall analyse and which we sumarize as follows.

Consider the MMPH of the Yu-Oh representation of the MMPH
Peres' subgraph, from Fig.~\ref{fig:yu-oh-peres}(b), shown in
Fig.~\ref{fig:yu-oh}. They removed all the vertices that share
only one edge and which are depicted as gray dots in
Fig.~\ref{fig:yu-oh}(a). Then they define operators by means of the
remaining vertices/vectors/rays/states which serve them to define
filters either for preparation or for detection of arbitrary
input or output states, respectively. The procedure goes as
follows.

\begin{figure*}[hbt]
\begin{center}
  \includegraphics[width=0.99\textwidth]{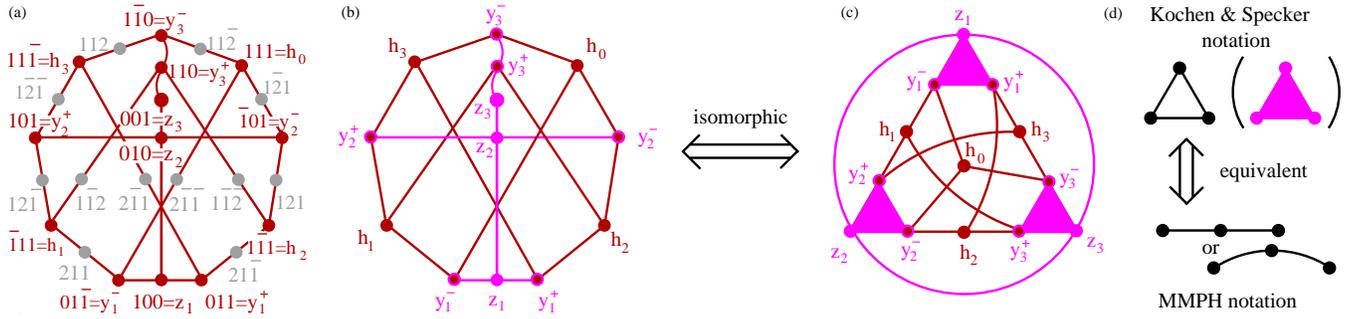}
\end{center}
\caption{(a) An MMPH subgraph of Peres' KS MMPH; 
  (b) Yu-Oh's reduction of (a); (c) Yu-Oh's Fig.~2 from
  \cite{yu-oh-12}; (d) Yu and Oh adopted a mixture of Kochen \&\
  Specker notation \cite{koch-speck};
  Cf.~\cite[Fig.~19]{pavicic-pra-17} (the triangles in (c)) and
  MMPH notation (the circle in (c)).} 
\label{fig:yu-oh}
\end{figure*}

Some of the vectors from Fig.~\ref{fig:yu-oh}(a) are represented as 

\begin{widetext}
\begin{align}
|y^-_1\rangle=
  \frac{1}{\sqrt{2}}
  \left(\begin{matrix}
0\\
1\\
-1\\
\end{matrix}  \right),\qquad
  |h_2\rangle=\frac{1}{\sqrt{3}}
  \left(
\begin{matrix}
1\\
-1\\
1\\
\end{matrix}  \right),\qquad
  |z_3\rangle=\left(
\begin{matrix}
0\\
0\\
1\\
\end{matrix}  \right),\qquad
  |N\rangle=\frac{1}{\sqrt{6}}
  \left(
\begin{matrix}
2\\
-1\\
1\\
\end{matrix}  \right).\qquad
\label{eq:vec}
\end{align}
\end{widetext}

Vectors serve Yu and Oh to define the following operators
\begin{align}
\hat{A}_i=I - 2|i\rangle\langle i| 
\label{eq:Ai}
\end{align}
where $i=1,\dots,13$ correspond to $y^-_1,y^-_2,\dots,z_3$ and
we add $i=14,\dots,25$ corresponding to gray dots in
Fig.~\ref{fig:yu-oh}(a). 
For instance, for $i=1,8,13,20$, corresponding to vectors from
Eq.~(\ref{eq:vec}), we have:

\begin{widetext}
\begin{align}
\hat{A}_1=
    \left(\begin{matrix}
1 & 0 & 0 \\
0 & 0 & 1 \\
0 & 1 & 0 \\
\end{matrix}  \right),\ \  
\hat{A}_8=\frac{1}{3}
  \left(
\begin{matrix}
1 & 2 & -2 \\
2 & 1 & 2 \\
-2 & 2 & 1 \\
\end{matrix}  \right),\ \ 
\hat{A}_{13}=\left(
\begin{matrix}
1 & 0 & 0 \\
0 & 1 & 0 \\
0 & 0 & -1 \\
\end{matrix}  \right),\ \ 
\hat{A}_{20}=\frac{1}{3}
  \left(
\begin{matrix}
-1 & 2 & 2 \\
2 & 2 & 1 \\
-2 & 1 & 2 \\
\end{matrix}  \right). \ 
\label{eq:A1813}
\end{align}
\end{widetext}

The operators can be combined in the following way:
\begin{align}
  \hat{L}_{13}=\sum^{13}_i\hat{A}_i-
  \frac{1}{4}\sum^{13}_i\sum^{13}_j\Gamma_{ij}\hat{A}_i\hat{A}_j=\frac{25}{3}I=8.\dot{3}I,
\label{eq:L}
\end{align}
where $\Gamma_{ij}=1$ whenever corresponding vectors $i,j$ are
orthogonal to each other and $\Gamma_{ij}=0$ when they are not;
also $\Gamma_{ii}=0$. The value 25/3 is curious since it is also
the sum of probabilities of detecting photons in the full scale
setup 25-16 shown in Fig.~\ref{fig:yu-oh-peres}(b). That may be
purely accidental. Also $\hat{L}_{25}$ is not diagonal.  
Yu and Oh consider neither vectors $|i\rangle$ nor operators
$\hat{A}_i$ for $i=14,\dots,25$ 

The fact that  each $\hat{A}_i$ has the spectrum $\{-1,1,1\}$ 
prompted Yu-Oh to calculate the upper bound of a corresponding
expression for 13 classical variables with predetermined values
-1 and 1:
\begin{align}
C_{13}=\sum^{13}_ia_i-\frac{1}{4}\sum^{13}_i\sum^{13}_j\Gamma_{ij}a_ia_j\le 8
\label{eq:C}
\end{align}

The inequality
\begin{align}
\langle\hat{L}\rangle>Max[C]
\label{eq:ine}
\end{align}
has been verified experimentally \cite{zu-wang-duan-12,li-zeng-19}
and also improved theoretically by changing the coefficients in
Eqs.~(\ref{eq:L}) and (\ref{eq:C})
\cite{cabello-bengtsson-12,kleinmann-cabello-12}.
However, no other set, apart from Yu-Oh's 13-16 itself, with such
properties has been found since.

We tested 50 sets and found that $\hat{L}$ of MMPHs without
left right symmetry mostly do not have diagonal matrices, although
some do, and that $\hat{L}$s of the majority of symmetric MMPHs
are also not diagonal; when they are, they are often not multiples
of $I$; for the ones whose $\hat{L}$s are multiples of $I$ we
found that they satisfy either $\langle\hat{L}\rangle<Max[C]$
or at most $\langle\hat{L}\rangle=Max[C]$, i.e., we have not
found instances of Eq.~(\ref{eq:ine}) being satisfied.
We give some examples below.

We should stress here that our definition of a {\em subgraph}
differs from a standard one. The standard definition assumes
that a subgraph is a hypergraph contained in a bigger
hypergraph as is. In contradistinction, we shall assume that
a subgraph might also be a hypergraph obtained from a bigger
hypergraph by taking out some edges and connecting the
remaining edges together, or simply by taking out some vertices.
The latter subgraph we denote as $\overline{\rm subgraph}$.
For instance {\tt 123,345,567.} is a standard subgraph of 
{\tt 123,345,567,781.}, while {\tt 123,345,561.} and
{\tt 13,345,567,781.} are its $\overline{\rm subgraphs}$.
Yu-Oh's 13-16 set is a $\overline{\rm subgraph}$ of Peres'
full scale 57-40 set. It is not a subgraph of either Peres'
57-40 or Peres' 33-40.

For a symmetric Kochen \&\ Specker's divided hexagon
\cite[Fig.~6(ii)]{pmmm05a} MMPH 8-7, a subgraph of the KS set
117-118 \cite{koch-speck}, shown in Fig.~\ref{fig:d}(a), we obtain
$\langle\hat{L}_8\rangle=Max[C_8]=9/2$.
The contextuality of the set has previously been considered in
\cite{clifton-93}.

\begin{figure*}[hbt]
\begin{center}
  \includegraphics[width=0.99\textwidth]{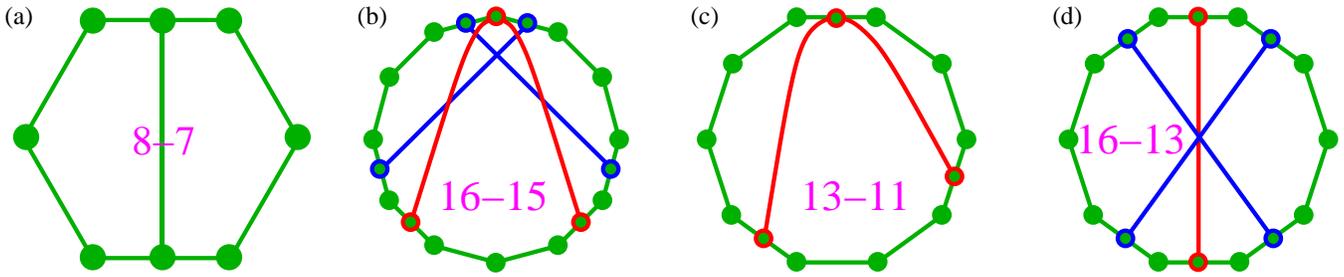}
\end{center}
\caption{(a) Hexagon MMPH from the KS set 192(117)-118
  \cite[Fig.~6(ii)]{pmmm05a} where it appears in 15 instances;
  (b) a symmetric $\overline{\rm subgraph}$ of Peres' MMPH
  with a non-diagonal $\hat{L}$; (c) an asymmetric
  $\overline{\rm subgraph}$ of Peres' MMPH with
  a diagonal $\hat{L}$ and $\langle\hat{L}\rangle<Max[C]$;
  (d) a constructed symmetric MMPH with a diagonal $\hat{L}$
  and $\langle\hat{L}\rangle<Max[C]$ but whose full scale
  version does not have a coordinatization.}
\label{fig:d}
\end{figure*}

From Peres' original KS set, using our programs STATES01, 
LOOP, and VECFIND we can generate arbitrary many subsets.
Most of them are asymmetric and their $\hat{L}$s are
non-diagonal. Also, many of highly symmetric ones, as,
e.g., 16-15 shown in Fig.~\ref{fig:d}(b) with $\hat{L}_{16}$
given in Eq.~(\ref{eq:L16}), are not diagonal. 

\begin{align}
\hat{L}_{16}=
 \frac{1}{6}   \left(\begin{matrix}
57 & 4 & 4 \\
4 & 54 & 3 \\
4 & 3 & 60 \\
\end{matrix}  \right)
\label{eq:L16}
\end{align}

An example of a non-symmetric 13-11 with a diagonal $\hat{L}$
is given in Fig.~\ref{fig:d}(c). It has $\langle\hat{L}_{13}\rangle=7.5$
and $Max[C_{13}]=7.75$, i.e., $\langle\hat{L}\rangle<Max[C]$.

We might try to construct a symmetric MMPH, e.g., the 16-13
one given in Fig.~\ref{fig:d}(d). For it we obtain
$\langle\hat{L}_{13}\rangle=9.5$
and $Max[C_{13}]=9.75$, i.e., again
$\langle\hat{L}\rangle<Max[C]$. However, the main problem
with such constructed MMPHs is that the probability of coming
across their filled (full scale) versions with coordinatizations
and therefore belonging to the 3-dim Hilbert space is minute, i.e.,
negligible even via automated construction and search on
a supercomputer. The full scale version (23-13) of the
aforementioned 16-13 apparently does not have a
coordinatization, either. 

We give more examples of $\langle\hat{L}\rangle$ vs.~$Max[C]$
calculations for other MMPHs in Sec.~\ref{subsec:crit}.

\subsection{\label{subsec:masters}MMPH masters}

There are several facts we would like to stress as starting
points of our elaboration on the MMPH non-binary sets.

\begin{enumerate}
\item[{(\em i)}] Peres wrote: “It can be shown that if a single ray is
  deleted from the set of 33, the contradiction disappears.
  It is so even if the deleted ray is not explicitly listed
  in table 1.” \cite[L176, bottom paragraph]{peres}
\item[Ad~{(\em i)}]The first sentence is wrong because
  MMPH 33-40 set {\tt 123,345,47,79,92A,AC,C4,AF,5F,HJ,
    HL,H7M,NCO,OPQ,QRL,RT,TJ,JPV,VX,XR,Va,La,
    ce,cT1,cg,FXM,Mhi,ijg,jl,le,ehn,np,pj,nN,
    gN,t9,tlO,t5,ap1,1MO.} is not critical as verified by
  STATES01. It is also not a KS set but only an MMPH non-binary set.
  The second sentence is conditionally correct because the full scale
  MMPH 57-40,
  {\tt 123,345,467,789,92A,ABC,CD4,AEF,5GF,
    HIJ,HKL,H7M,NCO.OPQ,QRL,RST,TUJ,JPV,VWX,
    XYR,VZa,Lba,cde,cT1,cfg,FXM,Mhi,ijg,jkl,
    lme,ehn,nop,pqj,nrN,gsN,tu9,tlO,tv5,ap1, 1MO.}
  is a critical KS set but only if assume that with the deleted ray
  we also delete the edge/triplet it belonged to. (This instance of
  Peres' 57-40 KS set is isomorphic to the one given above; the
  sequence of characters is different due to a reshuffling by
  automated tools we used to obtain 33-40 as a subgraph of 57-40. 
\item[{\em (ii)}] Yu and Oh write: ``The KS value assignments to
  the 13-ray set [13-16] are possible; i.e., no logical
  contradiction can be extracted by considering conditions 1 and 2
  [of Theorem \ref{th:ks}].'' \cite[p.~3, left column, top]{yu-oh-12}
\item[Ad {\em (ii)}]The claim is provisionally correct, but not 
  because ``no logical contradiction can be extracted by considering
  conditions 1 and 2''---it can be extracted: in 13-16 it is
  impossible to assign 1s and 0s in such a way that conditions 1 and 2
  are satisfied---and not because ``value assignments to the 13-ray
  set are possible''---they are not possible: one cannot assign 1s
  and 0s to its rays in such a way that conditions 1 and 2 are
  satisfied---but because the 13-16 set is not a set of triplets and
  therefore does not satisfy the first part of the KS theorem. 
\end{enumerate}

The ``incomplete triplets'' issue reappears in many papers and
books. For instance in Karl Svozil's book \cite{svozil-book-ql} in
Sec.~7.4 there is an excellent symmetric figure of Peres' 33-40 set
[Fig.7.12], we, actually, made use of to write down MMPH 57-40 set,
but we had to add 24 vertices that were not there; 33 vectors and
their corresponding logical proposition were explicitly given, but
the remaining 24 vectors were not mentioned. In the original
Kochen-Specker paper \cite{koch-speck} the triplets (edges with
3 vertices) were depicted as triangles and doublets (triplets from
which one vertex was dropped) as straight lines---all together
117 vertices of 192 ones contained in 118 triplets. Their triangles
are shown in \cite[Fig.~6(ii)]{pmmm05a}. The same triangles are
used in the Yu-Oh's set and are shown in Fig.~\ref{fig:yu-oh}(d).
This triangle notation is a source of some confusion in the
literature and research, though. For instance, in 
\cite{cabell-klein-budr-prl-14} on p.~4, Fig.~1 (b), where one line
from one of the triangles from Yu-Oh's set is deleted, we read:
``(b) $G_{\rm YO}$ minus one edge.'' However, the lines in the
triangle are not edges. The whole triangle is an edge (triplet) as
shown in  Fig.~\ref{fig:yu-oh}(d). The lines within a triangle are
orthogonalities and a removal of one of them means splitting the
triplet into two doublets, i.e., increasing the number of edges in
the set. So, the set in Fig.~1(a) of \cite{cabell-klein-budr-prl-14}
has 16 edges, while the set in Fig.~1(b) has 17 edges. In any case
the set (b) is not a subgraph of (a) nor is (a) a subgraph of (b). 
Of course, a removal of one of the orthogonalities must also be
accompanied by a switch to a new coordinatization of the whole set. 

In the {\em The Kochen-Specker Theorem} article in the 
{\em Stanford Encyclopedia of Philosophy} only 117 vertices were
considered. ``[W]hat KS have shown is that a set of 117 yes-no
observables cannot consistently be assigned 0-1 values''
\cite{ks-stanford-enc-18}. Jeffrey Bub writes:
``This yields a total of 49 rays and 36 orthogonal triples. Now the
only rays that occur in only one orthogonal triple are the 16 rays
with a 5 as component. Removing these 16 rays from the 49 rays
yields the following set of 33 rays that cannot be colored.''
\cite{bub}. However, 49 rays also cannot be colored and the
49-36 is critical, while 33-36 is not. 

These facts offer the following approach, though. The aforementioned
conditions 1 and 2 are also contained in the Def.~\ref{df:mmphs} of
an MMPH non-binary set and Peres' 33-40, Yu-Oh's 13-16, Bub's 33-36,
Conway-Kochen's 31-37, and Kochen-Specker' 117-118 sets all violate
the conditions 1 and 2, thus confirming that these sets are MMPH
non-binary sets. Moreover, they actually enable us to get many
smaller MMPH non-binary sets from them because none of these sets
is critical. And they are all equipped with at least the
coordinatization they inherit from their full scaled versions 57-40,
25-16, 49-36, 51-37, and 192-118, respectively, but often with even
simpler ones.

The MMPH strings of the last three sets are:

\parindent=0pt
Bub's 33-36 (derived from the full scale 49-36
\cite[Fig.~19]{pavicic-pra-17}):~{\tt 12,134,156,67,48,9AB,CDE,6B,4E,2FG,2HI,
  EG,GB,8I,I7,AJ,AK,C7L,MN9,HON,N3P,PL,MFQ,QL,
  M5R,RD,DO,STC,JHT,T5U,S3K,SFV,VW,98W,WU,X9C.}

Conway-Kochen's 31-37 (derived from the full scale 51-37
\cite[Fig.~19]{pavicic-pra-17}):
{\tt 123,245,26,57,89A,BCD,5D,3EF,3G,DF,
  FA,9H,87I,9J,CK,CL,LM,HN,M1N,KO,1OP,Q6R,QGH,BQS,
  PR,PJ,S4J,SET,NT,TI,RI,UV8,VGK,U6L,4V,UE,18B.}

and the Kochen-Specker's 117-118 (derived from the original
full scale 192-118 \cite[Fig.~19]{pavicic-pra-17}):

\begin{widetext}  
  {\tt 12,234,45,56,678,81,9A,ABC,CD,DE,EFG,G9,HI,IJK,KL,LM,MNO,OH,PQ,QRS,ST,TU,UVW,WP,1X,XYZ,Za,ab,bcd,
    d1,ef,fgh,hi,ij,jkl,le,mn,nop,pq,qr,rst,tm,uv,vwx,xy,yz,z!","u,\#\$,\$\%\&,\&','(,()*,*\#,e-,-/:,:;,;<,
    <=>,>e,?@,@[{$\backslash$},{$\backslash$}],]\^{},\^{}\_`,`?,\{|,|\}\textasciitilde,\textasciitilde +1,+1+2,+2+3+4,+4\{,+5+6,+6+7+8,+8+9,+9+A,+A+B+C,+C+5,+D+E,
    +E+F+G,+G+H,+H+I,+I+J+K,+K+D,?+L,+L+M+N,+N+O,+O+P,+P+Q+R,+R?,37,BF,JN,RV,Yc,gk,os,w!,\%),/=,[\_,
    \}+3,+7+B,+F+J,+M+Q,95e,HDe,PLe,aTe,mi?,uq?,y'?,;\#?,\{]1,+5+11,+D+91,+O+H1,1e?.}
\end{widetext}

\parindent=10pt
All of them have coordinatizations and none of them is critical. They
will be our MMPH non-binary {\em master sets} we shall get smaller
MMPH non-binary critical sets from, in Sec.~\ref{subsec:crit}.
Here we want to stress that we have chosen the above sets to be
our masters for historical reasons. But any set we obtain by stripping
the original four KS sets from some other number of vertices being
contained in only one edge can serve us as a master set. For instance,
by stripping not 24 but 17 such vertices from Peres' 57-40 KS set, we
obtain the following set  which we can also use as our master
set---Peres' 40-40  (derived from the full scale 57-40
\cite[Fig.~19]{pavicic-pra-17}): {\tt 123,345,467,78,829,9A,A4,9B,5B,CD,CE,
  C7F,GAH,HIJ,JKE,KLM,MND,DIO,OPQ,QRK,OST,ET,UVW,
  UM1,UX,BQF,FYZ,ZaX,ab,bW,WYc,cd,da,cG,XG,e8, ebH,e5,Td1,1FH.}

\parindent=10pt
We present two smaller critical MMPH non-binary sets 35-27 and 38-30,
obtained from this 40-40 set, in Appendix \ref{app:1c} because they
are bigger than Peres' 33-40 and they are critical, while Peres' 33-40
is not. Also, criticals with 33 or less vertices we obtained from
Peres' 33-40 and from Peres' 40-40 coincide. The difference is only in
criticals with 34 to 38 vertices which we, of course, cannot obtain
from Peres' 33-40 set.

\subsection{\label{subsec:crit}Classes of MMPH non-binary
  sets, their implementation, and their inequalities}

From the MMPH non-binary master sets given in
Sec.~\ref{subsec:masters} we obtain smaller MMPH non-binary
critical sets via STATES01. There is a principal difference in the
feature of criticality between these sets and the full scale KS sets,
though. 

If we removed any of the edges of a full scale KS critical set,
the remaining set would not be a KS set any more (see
Def.~\ref{def:crit}). If we then continued to strip further
edges from the remaining set, we would never arrive at a KS
set again. This is not so with an MMPH non-binary critical set.
When we remove any of its edges it does stop being an MMPH
non-binary set, but if we removed further edges from the
obtained set, it would often turn into a smaller MMPH non-binary
critical set. Therefore we introduce:

\begin{definition}\label{def:crit-nb} 
  An MMPH non-binary set is called an MMPH non-binary
  {\em critical\/} sets if a removal of any of its edges would
  turn the remaining set into an MMPH binary set. MMPH non-binary
  critical sets might properly contain smaller MMPH non-binary
  critical sets whose number of edges is smaller than the original
  critical set for at least 2 edges.
\end{definition}

Bub and Conway-Kochen's master sets share the coordinatization
while Peres and Kochen-Specker's ones have different ones
mutually and with respect to the former two sets. Therefore, also
the classes of smaller MMPH non-binary critical sets we obtain
from them will be structurally different.

From these master sets we generated classes of
smaller MMPH non-binary critical sets by means of our programs
\cite{pmmm05a,pm-entropy18}, although the algorithms and
programs should be redesigned and rewritten for an automated
generation. MMPH sets generated from a master set we call
a class of MMPH sets. So, we shall talk about Bub,
Conway-Kochen, Peres, and Kochen-Specker's classes.
Distributions of their criticals are shown in
Fig.~\ref{fig:dis-strip}. The criticals are mostly the standard
subgraphs of their masters obtained via our automated algorithms
and programs, except for a limited number of smaller
$\overline{\rm subgraphs}$  we obtained via new algorithms which
are still under development. Most $\overline{\rm subgraphs}$
have a parity proof unlike most of the standard subgraphs of which
only a very few have a parity proof. 

Notice that the biggest critical sets in
Fig.~\ref{fig:dis-strip}(a,c) have the same number of vertices as
their master sets, but 9,12 edges less, respectively. 

\begin{figure*}[hbt]
\begin{center}
  \includegraphics[width=0.9\textwidth]{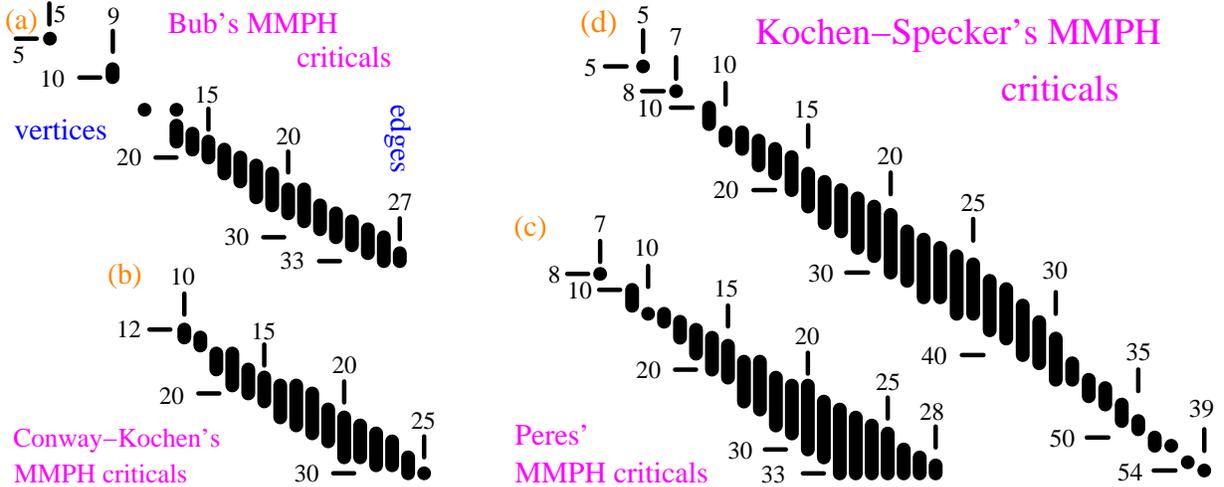}
\end{center}
\caption{(a) Distribution of MMPH non-binary critical sets
  generated from Bub's MMPH non-binary master set;
  (b) Conway-Kochen's criticals; (c) Peres' criticals;
  (d) Kochen-Specker's criticals.}
\label{fig:dis-strip}
\end{figure*}

A possible experimental implementation of MMPH non-binary
sets might be made in analogy to the experimental implementation
of KS sets carried out in \cite{d-ambrosio-cabello-13}. The
difference is that the latter sets contain only triplets, while
the former ones contain triplets and doublets, similarly to the
Yu-Oh's 13-16 set, or even only doublets as in the 5-5 set. To
carry out measurements on KS sets means that we have to verify
that the probability of detecting a particle or a photon at each
out-port of a gate representing an edge (triplet) is 1/3. Yu-Oh's
implementation rely on gates defined via Eqs.~(\ref{eq:Ai})
and (\ref{eq:L}) by means of 13 vertices/vectors/rays/states
and the gates representing 12 dropped vertices are not considered.
Measurements on MMPH non-binary sets might be carried out as for
KS sets with triplets (with the 1/3 probability of detection
at each out-port) and via {\em calibrated} detections of a
particle or a photon at out-ports of a gate representing a doublet
with the 1/2 probability of detecting a particle at each of the
two considered ports. When a vertex share a mixture of triplet and
doublet edges the probability of detection is $1/p$, where
$1/3\le p\le 1/2$. The data obtained at the out-ports corresponding
to the dropped third vertices are discarded or we simply do not
measure them at all as in Yu-Oh's experiments 
\cite{zu-wang-duan-12,li-zeng-19,cabello-bengtsson-12}.
To assure an equal distribution of outcomes at each port,
the inputs to doublet gates should be scaled up with respect to the
full triplet ones by 3/2 and this is why we call them
{\em calibrated}. 

The inequalities to be experimentally verified for the MMPH
non-binary sets can be defined as for the other two kinds of sets.
For instance, for Yu-Oh's 13-16 set we verify their inequality
given by Eq.~(\ref{eq:ine}): $8.3>8$. Let us consider the set
as shown in Fig.~\ref{fig:yu-oh-peres}(b) (excluding the gray dots).
This set contains 4 triplets and 12 doublets. Vertices {\tt A,K,L}
share only triplets, so the probability of having a click
along them is 1/3. Vertices {\tt 3,7,D,H} share only doublets
and the probability of getting clicks along them is 1/2.
Vertices {\tt 1,5,9,B,F,J} share a triplet and two doublets, each,
what yields the probability $(1/2+1/2+1/3)/3=4/9$. Altogether,
the probabilities for 13 vertices sum up to 
$3\times 1/3+4\times 1/2+6\times 4/9=17/3$. Let us call this sum
a {\em quantum hypergraph index} of an MMPH set and denote it as
$HI_q$. On the other hand, the set 13-16 allows at most four 1s.
This is a classical upper bound for getting classical detection
clicks. Let us call this classical upper bound, i.r., the maximal
number of 1s we can assign to vertices of a MMPH non-binary set
so as to satisfy the two conditions from Def.~\ref{df:mmphs}, a
{\em classical hypergraph index} $HI_c$. Hence, we obtain the
inequality $HI_q[\text{13-16}]=17/3=5.\dot{6}>HI_c[\text{13-16}]=4$.
Notice, that even {\em uncalibrated} probabilities give us
$HI_{q-unc}[\text{13-16}]=13/3=4.\dot{3}>HI_c[\text{13-16}]=4$. We
obtain uncalibrated probabilities by measuring all vertices in all
edges in Fig.~\ref{fig:yu-oh-peres}(b), meaning with gray dots
included. With each vertex in every edge we have a probability of
getting a click, i.e., of assigning 1 to it, being equal to 1/3.
If we now selected the 13 red-dot vertices, we would get
$13/3=4.\dot{3}$ which is also greater than $HI_c[\text{13-16}]=4$.
Notice also that the maximal number of 1s we can assign to vertices
in the full scale 25-16 set is 11 and that gives us the inequality
$HI_q[\text{25-16}]=25/3=8.\dot{3}<HI_c[\text{25-16}]=11$ which is
yet another proof that 25-16 is not a KS set. 

It is interesting that three of four considered masters also
satisfy the uncalibrated inequality $HI_{q-unc}>HI_c$. Bub's 33-36:
$HI_{q-unc}[\text{33-36}]=11>HI_c[\text{33-36}]=10$, Conway-Kochen's
31-37: $HI_{q-unc}[\text{31-37}]=10.\dot{3}>HI_c[\text{31-37}]=8$,
and Peres' 33-40 $HI_{q-unc}[\text{33-40}]=11>HI_c[\text{33-40}]=6$.

Let us now present several smaller MMPH  criticals from each
class, consider their properties, and calculate 
Yu-Oh-like expressions and values for some of them.

The smallest Bub's critical $\overline{\rm subgraph}$ with
coordinatization we found is the pentagon 5-5 {\tt 12,23,34,45,51}
(with the gray dots excluded) shown in Fig.~\ref{fig:bub-c}(a).
The full scale hypergraph 10-5
{\tt 162,273,384,495,5A1} is also shown Fig.~\ref{fig:bub-c}(a)
(with the gray dots included).

\begin{figure*}[hbt]
\begin{center}
  \includegraphics[width=0.99\textwidth]{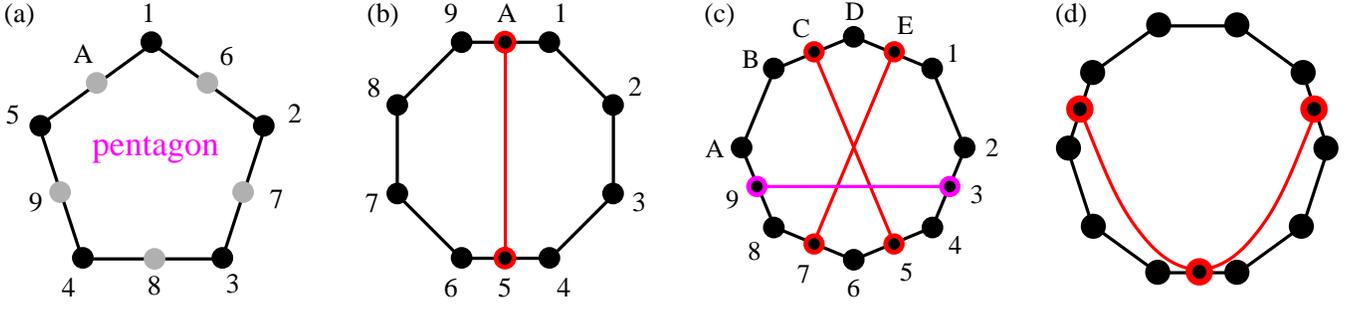}
\end{center}
\caption{Criticals generated from Bub's master:
  (a) $\overline{\rm subgraph}$ pentagon 5-5;
  (b) $\overline{\rm subgraph}$ 10-9;
  (c) standard subgraph 14-11;
  Critical generated from Peres' master:
  (d) 13-11.}
\label{fig:bub-c}
\end{figure*}

As we proved in \cite{pmmm05a}, the smallest loop edges can form
in a 3-dim space with vertices endowed with a real coordinatization
is a pentagon. We could not find (with Mathematica) a complex
coordinatization of any smaller MMPH, either. We conjecture that
the filled pentagon MMPH 10-5 is the smallest MMPH with a 
coordinatization in the 3-dim Hilbert space. Its coordinatization
is, e.g.,
1=\{0,0,1\},2=\{0,1,0\},
3=\{1,0,1\},4=\{1,1,-1\},5=\{1,-1,0\},6=\{1,0,0\},7=\{1,0,-1\},
8=\{-1,2,1\},9=\{1,1,2\},A=\{1,1,0\}.
It, of course, includes the coordinatization of
5-5. As we can easily check, the maximal number of 1s assignable to
vertices of 5-5, satisfying the two aforementioned condition, is 2.
Thus we have the following contextual inequality
$HI_q[\text{5-5}]=5\times 1/2=2.5>HI_c[\text{5-5}]=2$. Yu-Oh's
approach does not offer us such a contextual distinguisher since
for $\hat{L}$ and $C$ of Eqs.~(\ref{eq:L}), (\ref{eq:C}), and
(\ref{eq:ine}) we get $\hat{L}_{10}=2.5I$ and $C_{10}\le 2.5$.
Hence, $\langle\hat{L}_{10}\rangle=Max[C_{10}]$. MMPH non-binary
$\overline{\rm subgraph}$ 5-5 can actually be generated in all four
MMPH classes, but we have not shown them for Conway-Kochen and
Peres' classes in Fig.~\ref{fig:dis-strip}. The pentagon 5-5 has a
parity proof.

Subsequent small Bub's critical $\overline{\rm subgraphs}$ we
obtained, are 9-9 and 10-9. The latter is shown in
Fig.~\ref{fig:bub-c}(b). Its MMPH string can be easily read off
from the figure: {\tt 12,23,34,456,67,78,89,9A1,A5.}
Its possible coordinatization is:
1=\{0,0,1\},2=\{1,1,0\},
3=\{1,-1,1\},4=\{0,1,1\},5=\{2,-1,1\},6=\{1,1,-1\},7=\{1,0,1\},
8=\{1,2,-1\},9=\{2,-1,0\},A=\{1,2,0\}.
Vector component `2' is here because the set of {\tt 1-A} vertex
coordinates  is a subset of the {\tt 1-H} set of coordinates of the
filled set 17-9. As for the contextuality verification, we have
$HI_q[\text{10-9}]=6\times(1/2+1/3)/2+4\times 1/2=9/2=4.5\dot{3}>HI_c[\text{10-9}]=4$.
On the other hand, we have $\hat{L}_{10}=5.5I$ and $C_{10}\le 5.5$.
Hence, $\langle\hat{L}_{10}\rangle=Max[C_{10}]$. The set has a
parity proof. 

The first standard subgraph in the Bub's class we found is
14-11 shown in Fig.~\ref{fig:bub-c}(c). Its coordinatization~is
1=\{2,0,1\},2=\{-1,-1,2\},3=\{1,-1,0\},4=\{1,1,1\},5=$\!$\{2,-1,-1\},

\parindent=0pt
6=\{0,1,-1\},7=\{2,1,1\},8=\{-1,1,1\},9=\{1,1,0\},A=\{1,-1,2\},

B=\{2,0,-1\},C=\{1,0,2\},D=\{0,1,0\},E=\{-1,0,2\}. 

$HI_q[\text{14-11}]=4\times 1/3+10\times(1/2+1/3)/2=11/2=5.5\dot{3}>HI_c[\text{14-11}]=5$. The Yu-Oh approach gives: $\hat{L}_{10}=8.5I$ and $C_{10}\le 8.75$.
Hence, $\langle\hat{L}_{14}\rangle<Max[C_{14}]$. The set is
one of the few standard subgraphs that have a parity proof.  
The only other Bub's criticals with a parity proof we found
are 14-13, 18-15, 24-19, and 28-23.

\parindent=10pt
Another critical with $\hat{L}=cI$ ($c$ is a constant) we found
is 14-13: {\tt 12,23,34,45,56,67,789,9A,AB,BC,CD, DE1,E8.}
$\langle\hat{L}_{14}\rangle=7.5<Max[C_{14}]=7.75$.

Yu-Oh's 13-16 is from the Peres' class but the only other
critical with $\hat{L}=cI$ we found in Peres' class is the
$\overline{\rm subgraph}$ 13-11 shown in
Fig.~\ref{fig:dis-strip}(d):
{\tt 12,234,56,678,89,9A,ABC,CD,D1,35,4B7.} 
$\langle\hat{L}_{13}\rangle=7.5<Max[C_{13}]=7.75$,
The coordinatization is 1=\{1,1,$\sqrt{2}$\},2=\{0,$\sqrt{2}$,-1\},3=\{0,1,$\sqrt{2}$\}, 4=\{1,0,0\},

\parindent=0pt
5=\{1,$\sqrt{2}$,-1\},6=\{$\sqrt{2}$,-1,0\},7=\{0,0,1\},8=\{1,$\sqrt{2}$,0\},

9=\{$\sqrt{2}$,-1,1\},A=\{1,0,-$\sqrt{2}$\},B=\{0,1,0\},C=\{$\sqrt{2}$,0,1\},

D=\{1,1,-$\sqrt{2}$\}. The components $\pm\sqrt{2}$ come from the
coordinatization of the filled set 20-11 which requires 
the components $\pm\sqrt{2},3$, i.e., more than Peres'
master set itself. This is because 13-11 is a
$\overline{\rm subgraph}$ and not a standard subgraph of
the master set. $HI_q[\text{13-11}]=5.5\dot{3}>HI_c[\text{13-11}]=5$.
The critical 13-11 has a parity proof.
We found no standard subgraph of Peres' master with a
parity proof, though.

\parindent=10pt

In Fig.~\ref{fig:dis-strip}(b), only critical standard subgraphs
obtained via automated generation are shown. Hence, they are all
subgraphs of Conway-Kochen's master but we shall explain how one
can generate $\overline{\rm subgraphs}$ from them.

Let us consider Conway-Kochen's critical 13-10 shown in
Fig.~\ref{fig:c-k-c}(a): {\tt 12,234,45,56,678,89,9A1,ABC,
  3B7,CD5.}
Its coordinatization is: 1=\{1,1,0\},2=\{-1,1,1\},

\parindent=0pt
3=\{1,0,1\},4=\{1,2,-1\},5=\{0,1,2\},6=\{1,-2,1\},7=\{1,0,-1\},

8=\{1,1,1\},9=\{1,-1,0\},A=\{0,0,1\},B=\{0,1,0\},C=\{1,0,0\},

D=\{0,2,-1\}., after taking into account the filled
17-10 set. Similarly to Yu-Oh's set, the 13-10 set exhibits both
contextual indices: 
$HI_q[\text{13-10}]=4.9\dot{4}>HI_c[\text{13-10}]=4$
and 
$HI_{q-unc}[\text{13-10}]=13/3=4.\dot{3}>HI_c[\text{13-10}]=4$.
If we take out the vertex D (the gray dot in
Fig.~\ref{fig:c-k-c}(a)) the resulting $\overline{\rm subgraph}$
12-10 is critical too, which also shows that vertex-criticality
is not consistent. Unlike Yu-Oh's set, neither 13-10 nor 12-10
have $\hat{L}=cI$ satisfied. $\hat{L}_{13}$ is not diagonal and
$\hat{L}_{12}$ is diagonal but it is not a multiple of the unit
matrix. The set 12-10 does not exhibit both contextual
distinguishers:
$HI_q[\text{12-10}]=4.75\dot{4}>HI_c[\text{12-10}]=4$ but 
$HI_{q-unc}[\text{12-10}]=12/3=4=HI_c[\text{12-10}]=4$.
It is, of course, due to the lower number of vertices, since the
geometrical structure of the MMPHs, yielding the classical index 4,
remains the same.

\parindent=10pt

\begin{figure*}[hbt]
\begin{center}
  \includegraphics[width=0.9\textwidth]{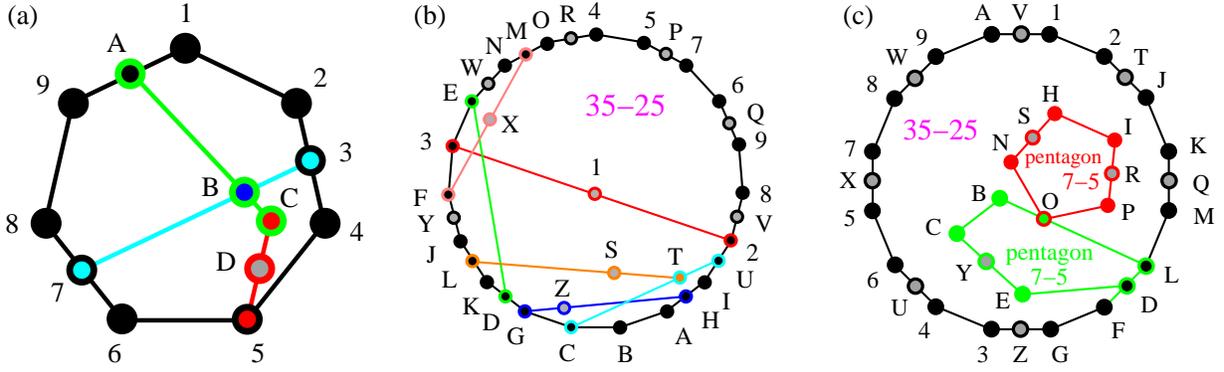}
\end{center}
\caption{(a) Conway-Kochen's MMPH non-binary critical
  set 13-10;
  (b) Kochen-Specker's 35-25a critical with uncalibrated
  contextuality; the outer loop is a 19-gon;
  (c) Kochen-Specker's 35-25b critical without uncalibrated
  contextuality; the outer loop is a 16-gon; See text.}
\label{fig:c-k-c}
\end{figure*}

We find similar features within Kochen-Specker's MMPH class.
Let us take two MMPH criticals from the middle of the distribution
shown in Fig.~\ref{fig:dis-strip}(d). 32-25a:
{\tt 45,5P7,76,6Q9,98,8V2,2UI,IHA,AB,BC,CG,GDK,KLJ,
  JYF,F3,3E,EWN,NMO,OR4,123,DE,STL,UTC,XMF,ZHG.}
and 35-25b: {\tt 12, 2TJ,JK,KQM,ML,LDF,FG,GZ3,34,4U6,
  65,5X7,78,8W9,9A,AV1,BC,DE,HI,NO,PO,RPI,SNH,
  YEC,OLB.} Their coordinatizations are too long to be given here.
Neither of them nor any other standard subgraph
in the Kochen-Specker's class we obtained in
Fig.~\ref{fig:dis-strip}(d) has a parity proof. 

Their different geometrical structure yield different classical
hypergraph indices: $HI_c[\text{35-25a}]=11$
and $HI_c[\text{35-25b}]=12$. However, the number of vertices
and therefore the quantum uncalibrated hypergraph indices of
both MMPHs are the same: $HI_{q-unc}[\text{35-25}]=35/3=11.\dot{6}$.
That means that 35-25a exhibits contextuality even for
uncalibrated measurement outcomes, while 35-25a does not.
Their calibrated indices are:
$HI_q[\text{35-25a}]=12.\dot{4}>HI_c[\text{35-25a}]=11$
and $HI_q[\text{35-25b}]=13.75>HI_c[\text{35-25b}]=12$.
Pentagons in 35-25b in 
Fig.~\ref{fig:c-k-c}(c) are subgraphs of Kochen-Specker's
master unlike the pentagon 5-5 (without gray dots) in
Fig.~\ref{fig:bub-c}, which is a $\overline{\rm subgraph}$.
If we removed all gray dots, the resulting set 25-25 will not be
critical any more, but if we leave {\tt S} and {\tt R} in the
red pentagon, the resulting 27-25 set will be critical. This
cannot be achieved with the green pentagon: leaving {\tt Y} as the
only gray dot in the 26-25 set will not make it critical.
$\hat{L}$ of the double pentagon is not diagonal. 

In Appendix \ref{app:1} we give chosen MMPH non-binary
critical sets which are standard subgraphs of the four  
MMPH master sets.

\section{\label{sec:disc}Discussion}

In the last half a century a vast number of constructive
proofs of quantum contextuality were obtained in even dimensional
Hilbert spaces, but only a very few in odd dimensional ones.
In particular, in the 3-dim space: Bub, Conway-Kochen, Peres,
and Kochen-Specker's KS sets, Yu-Oh contextual set, and
Klyachko-Can-Binicio{\u g}lu-Shumovsky's pentagram/pentagon
state-dependent set. All together 6 sets.

In this paper we present $n$-dim hypergraph contextuality
which consists in generating sets which preclude binary
assignments of values 0 and 1 to vertices of a hypergraph,
such that 1 is assigned to only one of the vertices in each
edge of the hypergraph, where an edge can contain less than $n$
mutually orthogonal vertices. Such a set which we call an $n$-dim
MMPH non-binary set, is defined by Def.~\ref{df:mmphs}.
We stay with $n=3$, i.e., we deal with qutrits
only, although the method can be extrapolated to any dimension.
The method serves us to distinguish classical models with
predetermined binary values, that can be assigned to measurement
outcomes of classical observables underlying classical
computation, from quantum models that do not allow for such values
and that underlie quantum computation.

Let us make use of a graphical representation of an $n$-dim MMPH to
describe the method. Vertices within an MMP hypergraph are drawn as
dots and edges containing mutually orthogonal vertices are drawn with
the help of straight or curved lines connecting these ``orthogonal
dots'' as shown in Figs.~\ref{fig:yu-oh-peres}, \ref{fig:yu-oh},
\ref{fig:d}, \ref{fig:bub-c}, and \ref{fig:c-k-c}. There can be a
different number of vertices/dots in edges.
Our program then verifies whether a chosen MMPH $k$-$l$ violates
or obeys the 0,1 assignment rules from Def.~\ref{df:mmphs}.
Edges in MMPH $k$-$l$ might contain 3 or 2 vertices.
We then consider a filled MMPH $k'$-$l$ in which we add a vertex
to each edge which contains only 2 vertices and try to find a
coordinatization for it. If successful, we make a one-to-one
correspondence between vertices and vectors in the $n$-dim Hilbert
space, i.e., for the MMPH $k'$-$l$ set. The MMPH $k$-$l$ set
inherits the coordinatization from from the MMPH $k'$-$l$ set. 
If we implemented the MMPH $k'$-$l$, each edge would be a gate with
$n$ outcomes and the probability of detecting an outcome would be
$1/n$.

Now, our approach consists in discarding the outcomes corresponding
to chosen vertices which share (are contained in) only one edge
from chosen edges and considering outcomes only of the remaining
vertices. In the 3-dim Hilbert space, that means that
some of the edges/gates should be taken as doublets and the others
as triplets. Our programs can handle such MMPHs because they are
written for edges of mixed sizes. Measurements on MMPH non-binary
sets might then be carried for triplets in a standard manner, i.e.,
with the probability of 1/3 of obtaining a click (value 1) at each of
the three ports, at a gate corresponding to an edge/triple, and via
a {\em calibrated} detection at out-ports of a gate representing a
doublet with the probability of 1/2. For vertices that share triplet
and doublet edges, the probability would be equal to $1/p$, where
$1/3\le p\le 1/2$. Calibration consists in sending three input
particles to a doublet gate for each two sent to a triplet gate,
i.e., the ratio of doublet to triplet inputs should be 3/2. 

To obtain a measure of quantum contextuality of an MMPH non-binary
set we define hypergraph indices. A classical hypergraph index
$HI_c$ is the maximal number of 1s we can assign to vertices within
edges of an MMPH so as to obey the 0,1 assignment rules from
Def.~\ref{df:mmphs}. A (calibrated) quantum hypergraph index $HI_q$
is the sum of calibrated probabilities for all $k$ vertices of
the aforementioned $k$-$l$ MMPH. An uncalibrated quantum hypergraph
index $HI_{q-unc}$ is the sum of 1/3-probabilities for all $k'$
vertices of the aforementioned $k'$-$l$ MMPH. A basic measure of
quantum contextuality of an MMPH non-binary set is the inequality
$HI_c<HI_q$. If it were satisfied, the MMPH would be contextual.
If not, it wouldn't. A stronger measure of quantum contextuality of
an MMPH non-binary set is the inequality $HI_c<HI_{q-unc}$. Some of
the considered MMPHs do satisfy both inequalities. For instance,
Yu-Oh's set 13-16, MMPH 13-10 shown in Fig.~\ref{fig:c-k-c}(a),
MMPH 35-25a shown in Fig.~\ref{fig:c-k-c}(b), and the MMPH master
sets considered in Sec.~\ref{subsec:crit}.
Other considered critical non-binary MMPHs satisfy only calibrated
inequalities but that is sufficient for experimental verification
of contextuality and possible applications.

We get thousands of MMPH non-binary sets as follows.
For the time being, we start with the previously
found KS sets: Bub 49-36, Conway-Kochen 51-37, Peres 57-40,
and Kochen-Specker's 192-118 which are all critical, i.e., if we
took out any edge from any of them they would stop being KS
\cite[Def.~3]{pavicic-pra-17}.
However, when we strip all the vertices contained in only one
edge we obtain Bub 33-36, Conway-Kochen 32-37, Peres 33-40, and
Kochen-Specker's 117-118 master sets, none of which is critical.
This enables us to generate thousands of new smaller MMPH critical
sets from them via our programs. Their distributions are shown in
Fig.~\ref{fig:dis-strip}. Chosen MMPHs critical sets are
given in Sec.~\ref{subsec:crit} and Appendix \ref{app:1} and shown in
Figs.~\ref{fig:bub-c} and \ref{fig:c-k-c}. They can be easily
implemented, in particular the smaller ones. 

The large number of obtained sets can also be used for an automated
testing of Yu-Oh's operators and inequalities along the examples we
gave in Secs.~\ref{subsec:yuohks} and \ref{subsec:crit}. For that we
are developing new algorithms and programs. This is a work in progress.

Next, one can make use of the obtained MMPHs to formulate new
entropic tests of contextualities following Kurzy{\'n}ski, Ramanathan,
and Kaszlikowski \cite{kurz-raman-12}. In 2012 they only had one
pentagram/pentagon set \cite{klyachko-08} at their disposal. The
pentagon 5-5 set is the simplest MMPH set we obtained
(see Fig.~\ref{fig:dis-strip}) and many other generated small sets
can now serve the purpose.

Also, the methods of evaluating conditions for being a SIC set
developed in \cite{ram-hor-14,cabell-klein-budr-prl-14} and the
methods of Cabello-Severini-Winter graph-theoretic approach to
quantum correlations \cite{cabello-severini-winter-14} require
samples of hypergraphs and that is what our method offers---a
constructive probabilistic generation of arbitrary MMPH sets when
coupled with automated vector generation algorithms we developed
in \cite{pm-entropy18}.

Finally, we stress that the MMPH constructive generation of
non-binary quantum sets from operationally chosen vectors out of
all possible ones within such sets contribute to our understanding
of the physical origin of quantum correlations since they represent
a new MMPH {\em scenario} for getting ``quantum correlations
from simple assumptions'' presented in \cite{cabello-19}.

\section{\label{sec:met}Methods}

The methods we use to handle quantum contextual sets rely on
algorithms and programs within the MMP language: 
VECFIND, STATES01, MMPSTRIP, MMPSHUFFLE, SUBGRAPH, LOOP, and
SHORTD developed in
\cite{bdm-ndm-mp-1,pmmm05a-corr,pm-ql-l-hql2,pmm-2-10,bdm-ndm-mp-fresl-jmp-10,mfwap-11,mp-nm-pka-mw-11,megill-pavicic-mipro-17}. They are
freely available at http://goo.gl/xbx8U2.  MMPHs can be
visualized via hypergraph figures consisting of dots and lines
and represented as a string of ASCII characters. The latter
representation enables processing billions of MMPHs simultaneously
via supercomputers and clusters. For the latter elaboration, we
developed other dynamical programs specific program to handle and
parallelize jobs with arbitrary number of MMP hypergraph vertices
and edges.

\vspace{6pt}

\begin{acknowledgments}
Supported by the Ministry of Science and Education
of Croatia through the Center of Excellence for Advanced
Materials and Sensing Devices (CEMS) funding, and by
MSE grants Nos. KK.01.1.1.01.0001 and 533-19-15-0022.
Computational support was provided by the cluster Isabella of
the Zagreb University Computing Centre and by the Croatian
National Grid Infrastructure (CRO-NGI).Technical supports by
Emir Imamagi\'c and Daniel Vr\v ci\'c from Isabella and
CRO-NGI are gratefully acknowledged.
\end{acknowledgments}

\begin{widetext}

\appendix
\section{\label{app:1} ASCII strings from MMPH non-binary
     classes}

   Below we give several chosen standard subgraphs from the four
   classes of critical MMPH sets shown in Fig.\ref{fig:dis-strip}.
   The first number in each line is $m$ of the biggest $m$-gon loop for
   the MMPH in the line. The second and third numbers are the numbers
   of the MMPH vertices and edges, respectively. Three commas
   ``{\tt ,,,}'' denote the end of a loop and * behind an ASCII
   symbol means that the symbol belongs to the loop.
   
\subsection{\label{app:1a} Bub's class}

10-v18-e13 \quad {\tt 213,36,6GC,CDB,BH8,89,9I4,45,5EA,A2,,,73*,9*2*,FD*7.}

11-v21-e16 \quad {\tt 213,3A,AHG,GFE,E57,76,6KL,LD8,89,9IC,C2,,,45*,B3*,D*2*,JF*B,H*8*4.}

14-v24-e18 \quad {\tt 12,2L3,34,4KG,GHI,I85,56,6B,BC,CA,A9,9FE,ED,DO1,,,78*,JH*F*,MN7,ND*C*.}

13-v27-e20 \ {\tt 213,3L4,45,5B,BC,CMN,NOE,E6F,FD9,9A,AJI,IHG,GP2,,,6*7,87,D*3*,KL*H*,QRO*,O*82*,RD*B*.}

17-v30-e23 \quad {\tt 543,3PC,CB,BA,AON,NJ6,67,7KL,L2S,SRI,IH,HTM,MGD,DE,E9,98,8Q5,,,12*3*,FG*,J*5*, P*O*M*,US*Q*,N*F2*.}

17-v33-e26 \quad {\tt 45,5CL,L7E,EF,FG,GBH,HIJ,JN,NRS,SWO,O6P,PMQ,QA2,2V8,89,9TU,U34,,,12*3*,6*7*,A*B*, C*D,KL*H*,G*D,M*J*,O*H*3*,XU*R*.}

\subsection{\label{app:1b} Conway-Kochen's class}
   
8-v15-e11 \quad {\tt 12,2E7,78,8D3,34,4C6,65,5F1,,,9AB,B7*6*,A3*1*.}

12-v22-e16 \quad {\tt 67,7GF,FB,B5D,D3,3ME,EC,CK8,89,9HI,I2A,AL6,,,12*3*,45*,A*B*C*,J42*.}

14-v26-e19 \quad {\tt 312,2F,FMN,NL5,596,67,7OJ,JIE,EB,BA,APD,DC,CQH,H3,,,45*3*,89*,G2*, KL*G,I*H*8.}

15-v29-e22 \quad {\tt 12,2RH,HQ3,34,47M,MT9,9A,AJE,ED,DIF,FG,GC,CB,B5N,NS1,,,5*6,7*8,H*I*,KL8,LD*6,OPG*, PN*M*.}

17-v30-e24 \ \ \ {\tt 12,2TD,DH,HRO,O87,76,65,5P4,43,3SJ,JK,K9L,LIM,MQN,NCE,EB,BU1,,,8*9*,AB*,C*D*,FG,I*G, P*GA,Q*J*F.}

\subsection{\label{app:1c} Peres' class}

10-v15-e12 \quad {\tt 12,2A,AC8,87,7D5,56,6B9,94,43,3E1,,,E*C*B*,FE*D*.}

14-v19-e16 \quad {\tt 12,23,34,4E,EGA,A9,9HB,BC,CFD,D8,87,7I6,65,51,,,I*G*F*,JI*H*.}

14-v27-e19 \quad {\tt 12,2QD,DE,E3I,IJK,KM5,56,6L8,87,7PG,GHF,FAB,BC,CR1,,,3*4,9A*,E*C*,NJ*9, OH*4.}

20-v35-e27 \quad {\tt 213,3G,GLM,MNE,EF,FVX,XYU,UP5,5I,IT7,78,89,9S6,6J,JZQ,QHA,AB,BKD,DC,CR2,,,45*6*, H*I*,F*3*,OP*K*,V*Q*L*,T*S*2*,WX*R*.}

22-v38-e30 \quad {\tt 345,5SU,UTH,HI,IR2,2cZ,ZFa,aJW,WVX,XQG,G7,76,6LC,CB,BMD,DE,EYA,A9,98,8ON,NbK,KP3,,, 12*3*,F*G*,J*5*,J*I*,K*E*,P*Q*L*,R*S*M*,a*Y*O*.}

\subsection{\label{app:1d} Kochen-Specker' class}

7-v12-e9 \quad {\tt 12,23,34,456,6A9,987,7C1,,,5*1*,B8*3*.}

12-v19-e14 \quad {\tt 12,2IA,AB,BC8,87,7E5,56,6D4,43,3FG,G9H,HJ1,,,9*A*,H*D*C*.}

16-v30-e21 \quad {\tt 312,2E,EMN,NL8,8RC,CD,D7,76,6GH,HP9,9SA,AB,BQ4,45,5IJ,JO3,,,8*9*3*,F2*,KL*F,TO*B*,  UP*D*.}

18-v38-e27 \quad {\tt 34,4VD,DE,ETG,GF,FcO,ON,NHJ,JK,KSC,CB,BZ5,56,6Y9,9A,AX7,78,8W3,,,12,H*I,LM,PQ,RQ, UMI,aR2,bP1,QN*L.}

18-v46-e33 \quad {\tt 56,6a8,87,7e9,9A,AcC,CB,BdD,DE,EbG,GF,FiP,PQ,QYX,XT,THJ,JK,Kh5,,,12,34,H*I,LM,NO, RS,T*US,VU,WU,ZRO,fI4,gM3,jW2,kV1,T*NL.}

12-v54-e39 \quad {\tt 78,8oV,VW,WgX,XY,YZ,ZfU,UT,TpA,A9,9Ps,sN7,,,12,34,56,BC,DE,FG,HI,JK,LM,N*O,P*Q,RS, ab,cb,dcS,eaR,hK2,iI1,jM6,kOG,lQ5,mC4,nE3,qY*D,rbB,s*LH,s*JF.}

\end{widetext}

\end{document}